\newcommand{\R}{\mathbb R}
\newcommand{\db}{\bar\partial}
\newcommand*{\rom}[1]{\expandafter\@slowromancap\romannumeral #1@}
\newtheorem{prop}{Proposition}
\newcommand{\ab}{$(\alpha,\beta)$}
\begin{document}

\preprint{APS/123-QED}

\title{A Cosmological Unicorn Solution to Finsler Gravity}

\author{Sjors Heefer}
 \email{s.j.heefer@tue.nl}
 \affiliation{%
 Department of Mathematics and Computer Science, Eindhoven University of Technology, Eindhoven 5600MB, The Netherlands
}

\author{Christian Pfeifer}
\email{christian.pfeifer@zarm.uni-bremen.de}
\affiliation{ZARM, University of Bremen, 28359 Bremen, Germany}

\author{Antonio Reggio}
 \email{antonio.reggio1@gmail.com}
\author{Andrea Fuster}
 \email{a.fuster@tue.nl}
\affiliation{%
 Department of Mathematics and Computer Science, Eindhoven University of Technology, Eindhoven 5600MB, The Netherlands
}

\date{\today}

\begin{abstract}
We present a new family of exact vacuum solutions to Pfeifer and Wohlfarth's field equation in Finsler gravity, consisting of Finsler metrics that are Landsbergian but not Berwaldian, also known as unicorns due to their rarity. Interestingly we find that these solutions have a physically viable light cone structure, even though in some cases the signature is not Lorentzian but positive definite. We furthermore find a promising analogy between our solutions and classical Friedmann-Lema\^itre-Robertson-Walker cosmology. One of our solutions, in particular, has cosmological symmetry, i.e. it is spatially homogeneous and isotropic, and it is additionally conformally flat, with the conformal factor depending only on the timelike coordinate. We show that this conformal factor can be interpreted as the scale factor, we compute it as a function of cosmological time, and we show that it corresponds to a linearly expanding (or contracting) Finsler universe.
\end{abstract}

\maketitle

\section{Introduction}
The interest in Finsler-geometric modified theories of gravity has picked up in recent years, and rightly so. It has become clear that small, natural modifications in the classic axiomatic approach by Ehlers, Pirani and Schild (EPS) \cite{Ehlers2012} naturally lead to Finsler spacetime geometry; \cite{Bernal_2020,Lammerzahl:2018lhw,TAVAKOL198523}; it has also become clear that modified dispersion relations (MDRs), usually discussed in the context of quantum gravity phenomenology \cite{Addazi_2022} generically induce a Finsler geometry on spacetime \cite{Girelli:2006fw,Raetzel:2010je,Rodrigues:2022mfj,Lobo:2020qoa} and it has been conjectured that Finsler spacetime geometry describes the gravitational field of a kinetic gas more accurately compared to its usual treatment in the Einstein-Vlasov system, \cite{Hohmann:2019sni,Hohmann:2020yia}. These results, to name just a few, show clearly that in certain situations, for instance in Planckian regimes or for certain type of matter, it is to be expected that Finsler geometry should be the proper way to model spacetime. In addition to the applications in gravitational physics, see also \cite{CANTATA:2021ktz}, there are several other instances in physics, such as the description of the propagation of waves in media, where Finsler geometry seems to be the appropriate tool \cite{Pfeifer_2019}.\\

Various physical subclasses of Finsler spacetimes can be distinguished, and two particularly important ones are the class of Berwald spacetimes and the class of Landsberg spacetimes that may be thought of as incrementally non-(pseudo-)Riemannian, respectively (precise definitions will follow later). Every Berwald spacetime is also Landsberg, but whether or not the opposite is true has been a long-standing open question in Finsler geometry. In fact, Matsumoto has stated in 2003 that this question represents the next frontier of Finsler geometry \cite{Bao_unicorns}, and as a token of their elusivity, Bao \cite{Bao_unicorns}  has called these non-Berwaldian Landsberg spaces \textit{`[\dots] unicorns, by analogy with those mythical single-horned horse-like creatures for which no confirmed sighting is available.'} Since 2006 some examples of unicorns have been obtained by Asanov \cite{asanov_unicorns}, Shen \cite{shen_unicorns} and  Elgendi \cite{Elgendi2021a} by relaxing the definition of a Finsler space. Even such examples of so-called $y$-local unicorns are still exceedingly rare. \\

Here we present a new family of exact solutions to Pfeifer and Wohlfarth's Finslerian extension of Einstein's field equations \cite{Pfeifer:2011xi,Hohmann_2019} which is precisely such a unicorn. It falls into one of the classes introduced by Elgendi. Our solutions extend the very short list of known exact solutions in Finsler gravity. Indeed, to the best of our knowledge the only ones currently known in the literature are the (m-Kropina type) Finsler pp-waves \cite{Fuster:2015tua} and their generalization as Very General Relativity (VGR) spacetimes \cite{Fuster:2018djw}, the Randers pp-waves \cite{Heefer_2021}, and the pp-waves of general \ab-metric type \cite{Heefer2023b}.\\

Interestingly we find that these solutions have a physically viable light cone structure, even though in some cases the signature is not Lorentzian but positive definite. In fact, the Finslerian light cone turns out to be equivalent to that of the flat Minkowski metric. The set of unit timelike directions (technically the indicatrix inside the light cone), on the other hand, has either a hyperbolic or a spherical nature, depending of the signature within the timelike cone, as expected.

Furthermore we find a natural cosmological interpretation of one of our solutions and a promising analogy with classical Friedmann-Lema\^itre-Robertson-Walker (FLRW) cosmology. In particular, our solution has cosmological symmetry, i.e. it is spatially homogeneous and isotropic, and it is additionally conformally flat, with the conformal factor depending only on the timelike coordinate. We show that this conformal factor can be interpreted as the scale factor, we compute the scale factor as a function of cosmological time, and we show that it corresponds to a linearly expanding (or contracting) Finslerian universe.

\section{Finsler geometry}

Before we recall the basic notions of Finsler geometry below we first introduce some notation. Given a (spacetime) manifold $M$, which we assume to be four dimensional, and given some coordinates on $M$, we will always consider the natural induced coordinates on its tangent bundle $TM$. More precisely, given a coordinate chart $(U,x)$ on $M$, $U\subset M$, we obtain a coordinate chart $(TU,(x,y))$ on $TM$, $TU\subset TM$ of $TM$, where a point $Y = y^\mu\partial_\mu \in TM$ is labeled by $(x,y)$. By a slight abuse of notation we will generally identify any point in $M$ with its expression in coordinates, and similarly for points in $TM$, i.e. $Y=y$. We denote the coordinate basis vectors of the tangent spaces of $TM$ by $\partial_\mu = \partial/\partial x^\mu$ and $\bar \partial_\mu = \partial/\partial y^\mu$, where $\mu = 0,...,3$.\\

A Finsler space is a smooth manifold $M$ endowed with a Finsler metric, i.e. a smooth map $F:TM\setminus 0\to \R$ such that
\begin{itemize}
	\item $F$ is (positively) homogeneous of degree $1$ with respect to $y$:
	\begin{align}
	F(x,\lambda y) =\lambda F(x, y)\,,\quad \forall \lambda>0\,;
	\end{align}
	\item the \textit{fundamental tensor} 
        \begin{align}
            g_{\mu\nu} = \db_\mu\db_\nu \left(\tfrac{1}{2}F^2\right)
        \end{align} 
        is nondegenerate.
\end{itemize}
The fundamental tensor $g_{\mu\nu}$ depends generally on both $x$ and $y$. When $F^2$ is quadratic in $y$, or, equivalently, when $g_{\mu\nu}$ depends only on $x$, then $g_{\mu\nu}$ is a (pseudo-)Riemannian metric and the theory reduces to (pseudo-)Riemannian geometry. \\

In order to describe spacetime geometry, one usually demands that the signature of $g_{\mu\nu}$ be Lorentzian, at least in some conic open subset of $TM$, which one might hope to identify with the cone of timelike directions. Moreover, in applications one very often encounters Finsler structures\footnote{In the literature one finds various other, more stringent, definitions of Finsler spacetimes, going back to the original definition by Beem \cite{Beem}. They vary in their precise technical details, depending on the scope of the application, see e.g.  \cite{Javaloyes2018,Caponio_2020,Hohmann:2021zbt} and references therein.}  that are only properly defined (smooth, nondegenerate) on a subset of $TM\setminus 0$. Such Finsler metrics are sometimes referred to as $y$-local, as opposed to $y$-global \cite{Bao_unicorns}. In particular, the unicorn solution that we will present here is of $y$-local type.

\subsection{The nonlinear connection and geodesic spray}

The \textit{Cartan nonlinear connection} is the unique homogeneous (in general nonlinear) Ehresmann connection on $TM$ that is smooth on $TM\setminus\{0\}$, torsion-free and compatible with $F$. 
It may therefore be viewed as a generalization of the Levi-Civita connection. For details we refer e.g. to \cite{Szilasi}. Its connection coefficients are given by
\begin{align}
N^\mu_\nu = \frac{1}{4}\bar{\partial}_\nu \bigg(g^{\mu \rho}\big(y^\sigma \partial_\sigma\bar{\partial}_\rho F^2 - \partial_\rho F^2\big)\bigg)\, .
\end{align}
The nonlinear connection induces the horizontal derivatives
\begin{align}
    \delta_\mu = \partial_\mu -N^\nu_\mu\bar\partial_\nu,
\end{align}
that, together with the $\bar\partial_\mu$, span each tangent space $T_{(x,\dot x)}TM$.
The (geodesic) spray coefficients can then be defined as
\begin{align}
G^\mu \equiv N^\mu_\nu y^\nu = \frac{1}{2}g^{\mu\rho}\left(y^\sigma\partial_\sigma\bar\partial_\rho F^2 - \partial_\rho F^2\right),
\end{align}
where the second equality follows from Euler's theorem for homogeneous functions. It immediately follows that we also have $N^\mu_\nu = \tfrac{1}{2}\bar\partial_\nu G^\mu$. The importance of the spray coefficients comes from the fact that the geodesics of $F$ are given by
\begin{align}
    \ddot x^\mu + G^\mu(x,\dot x) = 0,
\end{align}
which coincidentally is also the autoparallel equation of the nonlinear connection $N^\mu_\nu$. \\


The curvature of the nonlinear connection is defined via $ R^\rho{}_{\mu\nu}\bar\partial_\rho = -[\delta_\mu,\delta_\nu]$, which implies that
\begin{align}
    R^\rho{}_{\mu\nu} =  \delta_\mu N^\rho_\nu-\delta_\nu N^\rho_\mu.  
\end{align}
From the nonlinear curvature one may define the Finsler Ricci scalar and Ricci tensor as follows
\begin{align}\label{eq:def_Ricci}
    \text{Ric} = R^\rho{}_{\rho\mu}y^\mu,\qquad R_{\mu\nu} = \tfrac{1}{2}\db_\mu \db_\nu\text{Ric}.
\end{align}
A Finsler space is said to be \textit{Ricci-flat} if Ric $=0$, or equivalently, $R_{\mu\nu}=0$. We remark that the Finsler Ricci scalar is not to be confused with the scalar curvature usually defined in (pseudo-)Riemannian geometry as $R = g^{\mu\nu}R_{\mu\nu}$, also sometimes called the Ricci scalar.

\subsection{The Chern-Rund connection}
In addition to the canonical nonlinear connection,  various canonical linear connections can be introduced. However, the price one has to pay for linearity is that the linear connections do not in general live on the vector bundle $TM$ but rather on its pull-back $\pi^*TM$ by the canonical projection $\pi:TM\to M$. The pull-back bundle $\pi^*TM$ is considered as a vector bundel over $TM\setminus 0$ and sections of this vector bundle may be thought of simply as vector fields on $M$ with a dependence on both $x$ and $y\neq 0$. Since the manifold $TM\setminus 0$ has dimension $2n$, we get in general two sets of linear connection coefficients, namely
\begin{align}
    \nabla_{\delta_\mu}\partial_\nu = \Gamma^\rho_{\nu\mu}\partial_\rho,\qquad \nabla_{\bar\partial_\mu}\partial_\nu = \bar\Gamma^\rho_{\nu\mu}\partial_\rho.
\end{align}
The Chern-Rund connection is the unique linear connection on $\pi^*TM$ that is torsion-free and \textit{almost} metric compatible. For details we refer e.g. to \cite{Bao}. These conditions imply that $\bar \Gamma^\rho_{\mu\nu}=0$ and
\begin{align}
    \Gamma^\rho_{\mu\nu} = \tfrac{1}{2}g^{\rho\sigma}\left(\delta_\mu g_{\sigma \nu} + \delta_\nu g_{\mu\sigma} - \delta_\sigma g_{\mu\nu}\right).
\end{align}
Notice the similarity to the formula for the Levi-Civita Christoffel symbols of a (pseudo-)Riemannian metric. From this it is immediately clear that the Chern-Rund connection reduces to the Levi-Civita connection when $F$ is (pseudo-)Riemannian. 

\subsection{Berwald and Landsberg spaces}

Next we introduce two important classes of Finsler spaces: Berwald spaces and Landsberg spaces. First, if the spray is quadratic in $y$, i.e.  $\bar\partial_\mu\bar\partial_\nu\bar\partial_\sigma G^\rho=0$ then $F$ is said to be of \textit{Berwald} type. What this means geometrically is that the Chern-Rund connection may be understood as an affine connection on $M$, i.e. equivalently, a space is Berwald if and only if the connection coefficients $\Gamma^\rho_{\mu\nu}$ of the Chern connection depend only on $x$. \\

And second, introducing the Landsberg curvature
\begin{align}\label{eq:Landsb_tensor}
S_{\mu\nu\sigma} =  -\tfrac{1}{4}y_\rho \, \bar\partial_\mu\bar\partial_\nu\bar\partial_\sigma G^\rho, 
\end{align}
and the mean Landsberg curvature $S_\sigma = g^{\mu\nu}S_{\sigma\mu\nu}$, we say that a space is (weakly) \textit{Landsberg} if the (mean) Landsberg curvature vanishes identically. The geometrical significance of the Landsberg tensor is somewhat more difficult to state in simple terms without introducing more machinery, so instead we refer e.g. to \cite{Bao_unicorns}.\\

It is immediately obvious from the definitions that any Berwald space is a Landsberg space. Also, a (pseudo-)Riemannian space is always Berwald, hence in particular any (pseudo-)Riemannian space is Landsberg.

\subsection{Unicorns in Finsler geometry}

As observed above, we have the following inclusions:
\begin{align*}
   &\text{(pseudo-)Riemannian} \subset \text{Berwald} \subset \text{Landsberg}. 
\end{align*}
It has been a long-standing open question whether the last inclusion is strict. Do there exist Landsberg spaces that are not Berwald? In the $y$-global case the answer is unknown. For $y$-local spaces some examples are known, but these are exceedingly rare. As such, non-Berwaldian Landsberg spaces are referred to as \textit{unicorns} \cite{Bao_unicorns}. We recommend \cite{Bao_unicorns,UnicornSurvey} for reviews on the unicorn problem. \\

The first unicorns were found by Asanov \cite{asanov_unicorns} in 2006 and his results were generalized by Shen \cite{shen_unicorns} a few years later. These were the only known examples of unicorns until Elgendi very recently provided some additional examples of unicorns \cite{Elgendi2021a}. One of the families of unicorns introduced by Elgendi will be central in this work.


\section{The Finslerian field equations}

Although various proposals for Finslerian field equations in vacuum can be found in the literature \cite{Rutz,Pfeifer:2011xi,Horvath1950,Horvath1952, Ikeda1981, Asanov1983, Chang:2009pa,Kouretsis:2008ha,Stavrinos2014,Voicu:2009wi,Minguzzi:2014fxa}, it seems fair to say that Pfeifer and Wohlfarth's field equation \cite{Pfeifer:2011xi} has the most robust foundation.  It is obtained as the Euler-Lagrange equation of the natural Finsler generalization of the Einstein Hilbert action \cite{Pfeifer:2011xi,Hohmann_2019}, and furthermore it has been shown recently that the equation is the \textit{variational completion} of Rutz's equation  \cite{Rutz}, Ric $= 0$. The latter is arguably the simplest and cleanest proposal, and well physically motivated, but it cannot be obtained by extremizing an action functional, complicating the coupling of the theory to matter. For reference, Einstein's vacuum equation in the form $R_{\mu\nu} - \frac{1}{2}g_{\mu\nu}R = 0$ is also precisely the variational completion of the simpler equation $R_{\mu\nu}=0$ \cite{Voicu_2015}. While in the GR case the completed equation happens to be equivalent to the former, this is not true any longer in the Finsler setting.\\


Pfeifer and Wohlfarth's field equation in vacuum reads
\begin{align}
\text{Ric} \,- \,&\frac{F^2}{3}g^{\mu\nu}R_{\mu\nu} \nonumber \\
%
-\,&\frac{F^2}{3}g^{\mu\nu}\left(\bar\partial_\mu\dot S_\nu - S_\mu S_\nu + \nabla_{\delta_\mu}S_\nu \right) = 0, \label{eq:Pfeifer_Wohlfarth_eq}
\end{align}
where $\dot S_\nu \equiv y^\rho\nabla_{\delta_\rho} S_\nu$. 
%
%
For (pseudo-)Riemannian metrics, \eqref{eq:Pfeifer_Wohlfarth_eq} reduces to Einstein's field equation in vacuum. From the general expression \eqref{eq:Pfeifer_Wohlfarth_eq} it becomes immediately apparent that for weakly Landsberg spaces, characterized by the defining property that $S_i=0$, the field equation in vacuum attains the relatively simple form
\begin{align}\label{eq:Pfeifer_Wohlfarth_eq_Landsberg}
\text{Ric} - \frac{F^2}{3}g^{\mu\nu}R_{\mu\nu} = 0.
\end{align}
Recalling the definition \eqref{eq:def_Ricci} of $R_{\mu\nu}$ we have the following immediate result:

\begin{prop}
    Any Ricci-flat, $Ric=0$, weakly Landsberg space is a solution to the field equations \eqref{eq:Pfeifer_Wohlfarth_eq}
\end{prop}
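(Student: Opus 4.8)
The plan is to substitute the two hypotheses directly into the field equation \eqref{eq:Pfeifer_Wohlfarth_eq} and check that every surviving term vanishes. Given the machinery already assembled just before the statement, the argument should be essentially immediate, so the ``proof'' is a matter of careful bookkeeping rather than of genuine difficulty.

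First I would invoke the weakly Landsberg hypothesis. By definition this means that the mean Landsberg curvature vanishes identically, $S_\nu = 0$ on all of $TM\setminus 0$. Because $S_\nu$ vanishes identically as a section and not merely at a point, each of its derivatives vanishes as well: the vertical derivative $\bar\partial_\mu\dot S_\nu$, the horizontal covariant derivative $\nabla_{\delta_\mu}S_\nu$, and of course the algebraic term $S_\mu S_\nu$. Hence the entire last line of \eqref{eq:Pfeifer_Wohlfarth_eq} drops out and the equation collapses to the reduced form \eqref{eq:Pfeifer_Wohlfarth_eq_Landsberg}, exactly as already noted in the text preceding the statement.

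It then remains to dispose of the two surviving terms using the Ricci-flatness hypothesis $\text{Ric}=0$. The first term is literally $\text{Ric}$ and so vanishes by assumption. For the second, I would appeal to the definition \eqref{eq:def_Ricci}, namely $R_{\mu\nu} = \tfrac{1}{2}\bar\partial_\mu\bar\partial_\nu\text{Ric}$; since $\text{Ric}\equiv 0$ its vertical second derivatives vanish, so $R_{\mu\nu}=0$ and a fortiori $g^{\mu\nu}R_{\mu\nu}=0$. With both terms identically zero, the left-hand side of \eqref{eq:Pfeifer_Wohlfarth_eq} vanishes and the space solves the field equation.

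There is no real obstacle here. The only point deserving a moment's care is the observation that the weakly Landsberg condition annihilates not just the quadratic term $S_\mu S_\nu$ but also the two derivative terms, which is justified precisely because $S_\nu$ is assumed to vanish identically. Everything else follows directly from the definition \eqref{eq:def_Ricci} and the reduced equation \eqref{eq:Pfeifer_Wohlfarth_eq_Landsberg} already in hand.
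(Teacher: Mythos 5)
Your proof is correct and follows exactly the paper's (implicit) argument: the weakly Landsberg condition $S_\nu \equiv 0$ annihilates the entire last line of \eqref{eq:Pfeifer_Wohlfarth_eq}, reducing it to \eqref{eq:Pfeifer_Wohlfarth_eq_Landsberg}, and then $\text{Ric}\equiv 0$ together with the definition $R_{\mu\nu}=\tfrac{1}{2}\bar\partial_\mu\bar\partial_\nu\text{Ric}$ makes both remaining terms vanish. The paper treats this as immediate; you have simply written out the same bookkeeping.
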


In other words, any weakly Landsberg solution to the Rutz equation is automatically a solution to \eqref{eq:Pfeifer_Wohlfarth_eq}.


\section{An exact unicorn solution to Finsler gravity} 

\subsection{Elgendi's class of unicorns}

Elgendi recently introduced a class of unicorns \cite{Elgendi2021a} with Finsler metric given by
\begin{equation}
    F = \left(a\beta+\sqrt{\alpha^2-\beta^2}\right)e^{\frac{a\beta}{a\beta+\sqrt{\alpha^2-\beta^2}}},
\end{equation}
in terms of a real, nonvanishing constant $a$ and 
\begin{align}
    \alpha = f(x^0)\sqrt{(y^0)^2+\phi(\hat y)}, \qquad \beta = f(x^0) y^0,
\end{align}
where $f$ is a positive real-valued function and $\phi(\hat y) = \phi_{ij} \hat y^i \hat y^j = \phi_{ij} y^i  y^j$ is a nondegenerate quadratic form on the space spanned by $\hat y = (y^1, y^2, y^3)$, with constant, symmetric coefficients $\phi_{ij}$. Here and in what follows, indices $i,j,\dots$ will run over $1,2,3$, whereas greek induces $\mu,\nu,\dots$ will run over $0,1,2,3$. From a gravitational physics perspective, the only degree of freedom of these Finsler metrics is the function $f(x^0)$. The geodesic spray of $F$ is given by
\begin{align}
    G^0 &= \left(\frac{2f(x^0)^2(y^0)^2-\alpha^2}{f(x^0)^2}+\frac{a^2-1}{a^2}\frac{\alpha^2-\beta^2}{f(x^0)^2}\right)\frac{f'(x^0)}{f(x^0)}\\
    G^{i} &= Py^i,
\end{align}
where 
\begin{align}
    P &= 2\left(y^0+\frac{1}{af(x^0)}\sqrt{\alpha^2-\beta^2}\right)\frac{f'(x^0)}{f(x^0)},
\end{align}
and the Landsberg tensor vanishes identically. This shows that these metrics are indeed Landsberg, but not Berwald, since the $i$-components of the spray are not
quadratic in $y$. Note that our $G^k$ is twice the $G^k$ in Elgendi's paper \cite{Elgendi2021a}, due to a difference in convention. Explicitly then, Elgendi's unicorns have the form
\begin{align}\label{eq:unipos}
     F = f(x^0)\left(y^0 + \sqrt{\phi(\hat y)}\right)e^{\frac{y^0}{y^0 + \sqrt{\phi(\hat y)}}}.
\end{align}
where we have absorbed the constant $a$ into a redefinition of $x^0$. For our purposes we will modify this expression slightly, though.
\subsection{The modified unicorn metric}

The expression \eqref{eq:unipos} defining the unicorn metric is only well defined whenever $\phi(\hat y)\geq 0$. If $\phi$ is positive definite, this is necessarily the case, but in other signatures this is not always true. In order to extend the domain of definition of $F$, an obvious first approach would naturally be to replace $\phi$ by its absolute value, $|\phi|$, i.e.

\begin{align}\label{eq:Uni}
    F = f(x^0) \left(y^0 + \sqrt{|\phi(\hat y)|}\right) e^{\frac{y^0}{y^0 + \sqrt{|\phi(\hat y)|}}}\,.
\end{align}


From the physical point of view this is still not completely satisfactory, however. This can be seen by considering the light cone corresponding to such a Finsler metric $F$, given by the set of vectors for which $F=0$, and interpreted as the set of propagation directions of light rays. Indeed, barring some potential problems with the exponent to which we will come back later, the light cone would be given by those vectors satisfying $y^0 = -\sqrt{|\phi|}$, which would imply that the light cone is contained entirely within the half space $y^0<0$. Interpreting $y^0$ for the moment as a time direction, this would have the result that light rays can only propagate `backward in time' (with regards to their parameterization). Alternatively one might interpret this as saying that light can only be received, not emitted. This would mean that we could not describe radar signals or use radar methods \cite{Perlick2008,Pfeifer:2014yua,Gurlebeck:2018nme,Heefer2023b}, which we certainly can in the physical world. If on the other hand $y^0$ is a spacelike coordinate, the situation becomes even worse, as this would mean that light cannot propagate in the spatial $y^0$-direction (in contrast to the $-y^0$-direction).\\

To obtain a viable light cone structure, which allows for emission and reception of light rays in all spatial directions at each $x\in M$, we consider a modified unicorn metric, inspired by the construction of modified Randers metrics in \cite{Heefer2023b}. \\

Thus, our starting point will be the following Finsler metric:
\begin{align}\label{eq:uni_spacetime_pre}
\boxed{
    F_{0}
    = f(x^0) \left( |y^0| + \text{sgn}(\phi)\sqrt{|\phi|} \right) e^{\frac{|y^0|}{ |y^0| + \text{sgn}(\phi)\sqrt{|\phi|} }}\,,
    }
\end{align} 
from which we define the \textit{modified unicorn metric} as:
\begin{align}\label{eq:uni_spacetime}
    \boxed{
    F = \left\{\begin{matrix}
    F_{0} & \text{if} & |y^0| + \text{sgn}(\phi)\sqrt{|\phi|} \neq 0\\
    0 & \text{if} &  |y^0| + \text{sgn}(\phi)\sqrt{|\phi|} = 0
    \end{matrix}\right. \,.}
\end{align}
It will be confirmed in Sec.\,\ref{sec:solvFgrav} that such a metric is still of unicorn type, i.e. Landsberg but not Berwald.\\

The case distinction is necessary since the metric \eqref{eq:uni_spacetime_pre} is ill defined at vectors satisfying $|y^0| + \text{sgn}(\phi)\sqrt{|\phi|}=0$, because of the division by the same number in the exponent. $F_0$ does not have a well-defined limit to such vectors either, because the exponent does not stay negative in such a limit. This issue was already present for Elgendi's unicorn \eqref{eq:unipos}, yet from a purely mathematical point of view, this is not necessarily a problem, as one can simply opt to exclude this set of vectors from the domain of $F$. From a physics perspective, however, we want to interpret the set $F=0$ as directions in which light propagates.

Our definition of $F$ as in \eqref{eq:uni_spacetime}, ensures the existence of a well-defined light cone and preserves the unicorn property, which is why we will work with this throughout this article, even though it leads to a discontinuity at the light cone, which we discuss in a little more detail in Sec.\,\ref{sec:detg}.

Next, we will determine the cone structure and signature of the fundamental tensor, which will turn out to depend on the signature of $\phi$. In Sec.\,\ref{sec:solvFgrav} 
we determine the free function $f(x^0)$ in \eqref{eq:uni_spacetime} by solving the Finsler gravity equation. Afterward, we discuss the interpretation of the deformed unicorn in the context of cosmology and we conclude.


\subsubsection{Cone structure}

First we observe that, regardless of the exact form or signature of $\phi$, our modified unicorn metrics have a light cone structure $F=0$ that is equivalent to that of a pseudo-Riemannian metric. 

\begin{prop}
    The light cone of the modified unicorn metric \eqref{eq:uni_spacetime} is given by 
    \begin{align}\label{eq:lightcone}
        \left(y^0\right)^2 + \phi = 0.
    \end{align}
\end{prop}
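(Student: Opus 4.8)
The plan is to exploit the product structure of the defining expression $F_0$. Since $f(x^0)$ is positive and, wherever it is defined, the exponential factor is strictly positive, the only way for $F_0$ to vanish is for the linear prefactor $|y^0| + \text{sgn}(\phi)\sqrt{|\phi|}$ to be zero. First I would record the happy coincidence that this prefactor vanishes exactly on the set where $F_0$ is ill defined --- the set singled out in the second branch of the piecewise definition \eqref{eq:uni_spacetime}, where $F$ is declared to equal $0$. Consequently, on both branches the light cone $\{F=0\}$ coincides precisely with the zero set of the prefactor, so it suffices to establish the equivalence
\begin{align}
    |y^0| + \text{sgn}(\phi)\sqrt{|\phi|} = 0 \quad\Longleftrightarrow\quad (y^0)^2 + \phi = 0.
\end{align}

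For the equivalence I would argue by the sign of $\phi$. Rewriting the left equation as $|y^0| = -\text{sgn}(\phi)\sqrt{|\phi|}$ and using $|y^0|\geq 0$, a solution can exist only when $\phi\leq 0$; in that regime $|\phi| = -\phi$, so squaring gives $(y^0)^2 = -\phi$, that is, $(y^0)^2 + \phi = 0$. Conversely, if $(y^0)^2 + \phi = 0$ then $\phi = -(y^0)^2 \leq 0$ and $\sqrt{|\phi|} = |y^0|$, whence $|y^0| + \text{sgn}(\phi)\sqrt{|\phi|} = |y^0|\big(1 + \text{sgn}(\phi)\big) = 0$ for $\phi<0$, while the degenerate case $\phi = 0$ forces $y^0 = 0$ and the identity holds trivially. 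This closes the equivalence.

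The main point requiring care --- more a matter of bookkeeping than a deep obstacle --- is the interplay between the piecewise definition and the singularity of $F_0$: one must check that the two branches of \eqref{eq:uni_spacetime} glue so that $\{F=0\}$ is exactly the prefactor's zero set, neither larger nor smaller. In particular I would verify that when $\phi>0$ the prefactor $|y^0| + \sqrt{\phi}$ is strictly positive, its two summands being nonnegative and never simultaneously zero, so there are no spurious null directions; this is consistent with $(y^0)^2 + \phi = 0$ admitting no real solutions in that signature. The remaining steps are the elementary sign analysis above, which I expect to be entirely routine.
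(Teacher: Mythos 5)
Your proposal is correct and follows essentially the same route as the paper's proof: reduce $F=0$ to the vanishing of the prefactor $|y^0| + \mathrm{sgn}(\phi)\sqrt{|\phi|}$ via the piecewise definition and positivity of $f$ and the exponential factor, then show this is equivalent to $(y^0)^2+\phi=0$ by the sign analysis forcing $\phi\leq 0$. Your write-up is if anything slightly more careful than the paper's chain of equivalences, since you make explicit both the gluing of the two branches and the absence of null directions when $\phi>0$.
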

\begin{proof}
    The result follows from the following sequence of equivalences.
    \begin{align}
        F = 0 &\Leftrightarrow \text{sgn}(\phi)\sqrt{|\phi|}+|y^0|=0 \\
        &\Leftrightarrow  \text{sgn}(\phi)\sqrt{|\phi|}=-|y^0|\\
        &\Leftrightarrow  |\phi|=\left(y^0\right)^2 \quad\text{and}\quad \phi \leq 0\\
        &\Leftrightarrow  \phi=-\left(y^0\right)^2\\
        &\Leftrightarrow  \phi+\left(y^0\right)^2 = 0.
    \end{align}
\end{proof}
Depending on the signature of $\phi$, we can make a more precise statement. 

\begin{prop}\label{prop:lightcone_details}
    Let $F$ be the modified unicorn metric \eqref{eq:uni_spacetime} corresponding to some some nondegenerate quadratic form $\phi_{ij}$. Then the following holds:
    \begin{itemize}
        \item For $\phi_{ij}$ of signature $(+,+,-)$ the null structure of $F$ is identical to the Minkowski metric spacetime light cone structure of signature $(+,+,+,-)$.
        \item If $\phi_{ij}$ is negative definite, i.e. of signature $(-,-,-)$ then the light cone of $F$ is identical to the Minkowski metric spacetime light cone of signature $(+,-,-,-)$.
        \item If $\phi_{ij}$ is positive definite, i.e. of signature $(+,+,+)$ then the light cone of $F$ is given by $y^\mu=0$.
        \item For $\phi_{ij}$ of signature $(-,-,x)$ the null structure of $F$ is identical to the one of a pseudo-Riemannian metric manifold with signature $(+,-,-,+)$.
    \end{itemize}
\end{prop}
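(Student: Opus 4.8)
The plan is to reduce the whole statement to the preceding proposition together with a routine signature count. By that proposition, eq.~\eqref{eq:lightcone}, the null set $\{F=0\}$ is exactly the zero set of the quadratic form
\begin{align}
    Q(y) = \left(y^0\right)^2 + \phi(\hat y) = \left(y^0\right)^2 + \phi_{ij}\,y^i y^j.
\end{align}
The first, and essentially only, observation I would make is that, since the coefficients $\phi_{ij}$ are constant, $Q$ is nothing other than the quadratic form of a genuine (constant) pseudo-Riemannian metric, namely $g=\text{diag}(1,\phi_{ij})$ in the coordinates $(y^0,\hat y)$. Hence the null structure of $F$ coincides, as a cone in each tangent space, with the light cone of this pseudo-Riemannian metric, and the whole proposition reduces to reading off the signature of $g$.

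To determine that signature I would note that the $(y^0)^2$ block is $g$-orthogonal to the $\hat y$-block and contributes a single positive eigenvalue, so Sylvester's law of inertia yields $\text{sgn}(g)=(+)\oplus\text{sgn}(\phi)$. A linear change of the spatial coordinates $\hat y$ diagonalises $\phi$ while fixing $y^0$; being a linear isomorphism of each tangent space, it carries null cones to null cones, which is what justifies comparing the resulting cone with the standard Minkowski cone of the same signature.

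It then remains to exhaust the four inequivalent signatures of a nondegenerate $\phi_{ij}$ in three dimensions. For $\phi$ of type $(+,+,-)$ one gets $\text{sgn}(g)=(+,+,+,-)$; for $\phi$ negative definite one gets $(+,-,-,-)$; and for $\phi$ of the remaining mixed type $(-,-,+)$ one gets $(+,-,-,+)$. In each of these three cases $g$ has Lorentzian or neutral signature, so its light cone is a genuine cone, identical (up to the linear coordinate change above) to that of the respective Minkowski or pseudo-Riemannian metric, matching the three stated conclusions.

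The one case that must be singled out, which I expect to be the only genuinely distinct point rather than a real obstacle, is $\phi$ positive definite: then $\text{sgn}(g)=(+,+,+,+)$ is Euclidean, so $Q=0$ forces $y^\mu=0$ and the cone degenerates to the single point $y^\mu=0$. This is precisely the remaining assertion, and it completes the case distinction.
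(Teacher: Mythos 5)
Your proposal is correct and takes essentially the same route as the paper: the paper states Proposition~\ref{prop:lightcone_details} as an immediate consequence of the preceding proposition, which identifies the null set with the zero set of $\left(y^0\right)^2 + \phi(\hat y)$, and the four bullets are then exactly the signature count you carry out. Your explicit invocation of Sylvester's law of inertia and of a $y^0$-preserving linear diagonalisation of $\phi$ just makes rigorous what the paper leaves implicit.
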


This singles out the $(+,+,-)$ and $(-,-,-)$ signatures of $\phi_{ij}$ as the ones that are physically reasonable. Using the light cone structure it is natural to interpret the interior of the future and past pointing light cone as cones of future and past pointing timelike directions, respectively. In the $(+,+,-)$ case this leads to the interpretation of the coordinate $x^3$ as timelike coordinate, while in the $(-,-,-)$ case $x^0$ would be the timelike coordinate.


\subsubsection{Signature of the fundamental tensor}\label{sec:detg}

Next we investigate the signature of our modified unicorn metrics.

\begin{prop}\label{prop:signature}
Consider a modified unicorn metric $F$ as in \eqref{eq:uni_spacetime} and let $\mathcal{S(\phi)}$ be the set of all $\hat y = (y^1,y^2,y^3)$ that are $\phi$-spacelike and $\mathcal{T(\phi)}$ the set of all $\hat y$ that are $\phi$-timelike.
\begin{itemize}
    \item If $\phi$ is positive definite or negative definite then $g_{\mu\nu}(x,y)$ is positive definite on its entire domain of definition.
    \item If $\phi$ is Lorentzian then $g_{\mu\nu}(x,y)$ is of Lorentzian signature $(+,+,+,-)$ for all $y\in \R\times\mathcal S(\phi)$ and $g_{\mu\nu}(x,y)$ is of signature $(+,-,-,+)$ for all $y\in  \R\times \mathcal T(\phi)$.
\end{itemize}
\end{prop}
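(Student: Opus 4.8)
The plan is to reduce the four-dimensional signature computation to a single sign and one $2\times2$ block, exploiting that $F_0$ depends on $y$ only through the two jointly one-homogeneous quantities $T:=|y^0|$ and $R:=\text{sgn}(\phi)\sqrt{|\phi|}$. Writing $F_0=f(x^0)\,H(T,R)$ with $H(T,R)=(T+R)\,e^{T/(T+R)}$, and noting that the overall factor $f(x^0)^2>0$ only rescales $g_{\mu\nu}$ and hence leaves its signature unchanged, it suffices to determine the signature of $G_{\mu\nu}:=\tfrac12\bar\partial_\mu\bar\partial_\nu(H^2)$ on the region where $F$ is smooth and nondegenerate, i.e. $y^0\neq0$, $\hat y\neq0$ and $T+R\neq0$.

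First I would record the $y$-derivatives of the building blocks. Since $\bar\partial_\mu T=\text{sgn}(y^0)\,\delta_\mu^0$, the Hessian of $T$ vanishes, while with $Q:=\sqrt{|\phi|}$ and $P_i:=\phi_{ij}y^j/Q$ one has $\bar\partial_i R=P_i$ and $\bar\partial_i\bar\partial_j R=\phi_{ij}/Q-(R/Q^2)P_iP_j$ (the mixed and $00$ components vanishing). Substituting into $G_{\mu\nu}$ via the chain rule, the tensor organizes itself as
\[
G_{\mu\nu}=c_\phi\,\hat a_{\mu\nu}+E_{\mu\nu},\qquad \hat a_{\mu\nu}=\text{diag}(1,\phi_{ij}),\qquad c_\phi=\frac{HH_R}{Q}=\text{sgn}(\phi)\,e^{2m},
\]
where $m=T/(T+R)$ and $E$ is a symmetric form of rank at most two, supported on the two covectors $e_\mu=\delta^0_\mu$ and $\pi_\mu=P_i\delta^i_\mu$. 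The elementary identities doing the work here are $H_R=e^{m}R/(T+R)$ (so that $HH_R=R\,e^{2m}$ and thus $\text{sgn}(c_\phi)=\text{sgn}(\phi)$), together with $P_iy^i=R$ and $Q^2=|\phi|$.

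The rank-two structure is what makes this tractable. Decomposing $\R^4=V\oplus W$ $\hat a$-orthogonally with $V=\text{span}\big(\hat a^{-1}e,\hat a^{-1}\pi\big)$ — whose two generators are $(1,0,0,0)$ and $(0,y^i/Q)$ — and $W=V^{\perp_{\hat a}}$, all of $E$ lives on $V$, so $G=c_\phi\,\hat a$ on the two-dimensional $W$ while $G|_V$ is a $2\times2$ block. On $V$ one computes $\hat a|_V=\text{diag}(1,\text{sgn}\phi)$, which fixes the signature of $\hat a|_W$ as the part of $\text{sig}(\hat a)=(+)\oplus\text{sig}(\phi)$ left over after removing $(1,\text{sgn}\phi)$; together with $\text{sgn}(c_\phi)=\text{sgn}(\phi)$ this already settles the $W$-contribution. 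For the block, after the cancellation that turns its $(2,2)$-entry into $H_R^2+HH_{RR}$, I would show that $G|_V$ is congruent (via the harmless diagonal sign $\text{diag}(1,\text{sgn}(y^0)\,\text{sgn}\phi)$) to the Hessian of $\tfrac12 H^2$ in the variables $(T,R)$. This reduced two-variable Hessian is where the real content sits and is the step I expect to be the main obstacle: one must compute its entries and verify positivity. The payoff is clean, namely $\det\mathrm{Hess}_{(T,R)}(\tfrac12 H^2)=e^{4m}>0$ (the determinant of the unnormalized matrix being exactly $(T+R)^4$, cancelling the $(T+R)^{-4}$ prefactor) with a positive $(1,1)$-entry, so the block is \emph{always positive definite}, independently of the signature of $\phi$.

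Finally I would assemble the signature case by case using $\text{sig}(g)=(+,+)\oplus\text{sig}\big(c_\phi\,\hat a|_W\big)$. If $\phi$ is positive definite then $c_\phi>0$ and $\hat a|_W$ is $(+,+)$, giving $(+,+,+,+)$; if $\phi$ is negative definite then $\hat a|_W$ is $(-,-)$ but $c_\phi<0$ flips it to $(+,+)$, again giving a positive definite $g$. For $\phi$ Lorentzian of signature $(+,+,-)$, with $\mathcal{S}(\phi)=\{\phi>0\}$ and $\mathcal{T}(\phi)=\{\phi<0\}$, the spacelike region yields $\hat a|_W$ of signature $(+,-)$ with $c_\phi>0$, hence $(+,+,+,-)$, while the timelike region yields $\hat a|_W$ of signature $(+,+)$ with $c_\phi<0$, flipping it to $(-,-)$ and hence giving $(+,+,-,-)\cong(+,-,-,+)$, exactly as claimed.
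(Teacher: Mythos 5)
Your proposal is correct, and I verified the computational claims you deferred: $H_R=e^{m}R/(T+R)$, hence $HH_R=Re^{2m}$ and $\mathrm{sgn}(c_\phi)=\mathrm{sgn}(\phi)$; the block $g|_V$ in the basis $\{(1,0,0,0),(0,\hat y/Q)\}$ is indeed congruent, via $\mathrm{diag}(1,\mathrm{sgn}(y^0)\,\mathrm{sgn}(\phi))$, to the $(T,R)$-Hessian of $\tfrac{1}{2}H^2$ (using $P_iy^i=R$ and the cancellation you describe in the $(2,2)$ entry), and that Hessian works out to
\begin{align}
\mathrm{Hess}_{(T,R)}\!\left(\tfrac{1}{2}H^2\right)=\frac{e^{2m}}{(T+R)^2}
\begin{pmatrix}
(T+2R)^2+R^2 & 2R^2\\
2R^2 & T^2+R^2
\end{pmatrix},
\end{align}
whose determinant is $e^{4m}>0$ because $\left((T+2R)^2+R^2\right)\left(T^2+R^2\right)-4R^4=(T+R)^4$, with positive $(1,1)$ entry on the domain $T+R\neq 0$; positive definiteness follows by Sylvester's criterion, exactly as you assert.

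Your route is, however, genuinely different from the paper's. The paper proceeds computationally in adapted coordinates: it diagonalizes $\phi_{ij}$, computes the full $4\times 4$ determinant $\det g=\mathrm{sgn}(\phi)\,\varepsilon_1\varepsilon_2\varepsilon_3 f^8 e^{8m}$ to decide in which cases $g$ can be Lorentzian at all, and then exploits the residual rotation/Lorentz symmetry of $\phi$ to set $y^2=y^3=0$, in which frame $g$ takes the block form $e^{2m}f^2\,\mathrm{diag}(M,\pm 1,\pm 1)$ with a $2\times 2$ block $M$ whose definiteness is pinned down by the determinant sign plus a single test vector. You avoid both the coordinate choice and the symmetry reduction: the observation that $g/f^2$ differs from the conformal multiple $c_\phi\hat a$ by a rank-two form supported on $\mathrm{span}\{e,\pi\}$ produces a $g$-orthogonal splitting $\R^4=V\oplus W$ in which all dependence on the signature of $\phi$, and on $\mathcal S(\phi)$ versus $\mathcal T(\phi)$, is concentrated in $c_\phi\,\hat a|_W$, while $g|_V$ is a universal positive definite block. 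This is structurally more illuminating---it explains \emph{why} the $(T,R)$ block is positive definite irrespective of the signature of $\phi$---and it treats arbitrary non-diagonal $\phi_{ij}$ and both Lorentzian conventions uniformly (you spell out only $(+,+,-)$; for $(-,-,+)$ the same bookkeeping, with $\mathcal S(\phi)=\{\phi<0\}$, again gives $(+,+,+,-)$ and $(+,+,-,-)$). Two minor remarks: in the Lorentzian case the smoothness domain should also exclude $\phi(\hat y)=0$, which is immaterial here since the claims concern only $\mathcal S(\phi)$ and $\mathcal T(\phi)$; and the paper's brute-force determinant, which your argument dispenses with, is a formula the text reuses elsewhere when discussing the discontinuity of $F$ at the light cone, so your approach trades that explicit byproduct for generality and transparency.
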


The proof of this proposition can be found in the appendix \ref{app:prf4}.  Let us discuss here some interesting observations.

Consider the two physically reasonable scenarios we identified below Prop. \ref{prop:lightcone_details} as a result of their good light cone structure, i.e. $\phi$ having $(+,+,-)$ or $(-,-,-)$ signature. In both cases the light cone is equivalent to that of Minkowski space. 

Surprisingly, inside the interior of this cone it is easy to see from the previous proposition, that the signature of $g$ is not Lorentzian. Indeed, if $\phi$ has signature $(+,+,-)$, then $g$ has signature $(+,-,-,+)$ inside the cone, while if $\phi$ is negative definite then $g$ is positive definite inside the cone. In the latter case this does not contradict the existence of the light cone structure due to the reduced smoothness and continuity of the Finsler metric $F$ \eqref{eq:uni_spacetime}. Indeed, 
\begin{align}
    \lim_{\substack{|y^0| + \text{sgn}(\phi)\sqrt{|\phi|}\to 0\\ |y^0| + \text{sgn}(\phi)\sqrt{|\phi|}\leq 0}}F_0 = 0.
\end{align}
whereas
\begin{align}
    \lim_{\substack{|y^0| + \text{sgn}(\phi)\sqrt{|\phi|}\to 0\\ |y^0| + \text{sgn}(\phi)\sqrt{|\phi|} > 0}}F_0 = \infty\,.
\end{align}
In other words, for $\phi$ being negatively definite, $F$ extends continuously to the light cone from the spacelike directions, but not from the timelike directions. For $\phi$ being of signature $(+,+,-)$, the situation is reversed.

In several alternative definitions of Finsler spacetimes \cite{Javaloyes2018,Hohmann:2021zbt}, one requires that there exists a cone inside which the fundamental tensor has Lorentzian signature, in order to guarantee (among other things) the existence of a physical light cone structure. Here, however, we found that there exist Finsler geometries that do have a satisfactory light cone structure, even without this property. This is an interesting new observation about Finsler geometry in its own right: apparently even in positive definite signature, a light cone structure  may arise due to irregularities (e.g. discontinuities) of the Finsler metric. \\

We note, however, that the signature anomaly of the Finsler metric in the physically relevant $(-,-,-)$ case leads to the fact that the future and past pointing unit-normalized timelike directions each do not lie on a deformed  hyperboloid, but on a deformed sphere with one point removed, the zero vector. Moreover, as a result of the discontinuity in the Finsler metric, the norm of the timelike directions do not tend to zero when one approaches the light cone from the interior of the cone. Whether this is acceptable or poses a fundamental problem in the application of these metrics in spacetime physics is not immediately clear. At the very least,  the clock postulate (the fact that the time an observer measures between two events is given by the length of its worldline connecting these events, independent of the parametrization of the worldline) is still perfectly valid, due to the 1-homogeneity of the Finsler metric.

\subsection{Solving the Finsler gravity equation}\label{sec:solvFgrav}
Next, we seek to determine the form of the free function $f(x^0)$ of \eqref{eq:uni_spacetime} from the Finsler gravity equation \eqref{eq:Pfeifer_Wohlfarth_eq_Landsberg}. The geodesic spray of $F$ is given explicitly by
\begin{align}
    G^0 &= \left((y^0)^2 - |\phi|\right)\frac{f'(x^0)}{f(x^0)}\\
    G^{i} &= Py^i,\qquad i=1,2,3
\end{align}
where 
\begin{align}
    P = 2\left(|y^0|+\text{sgn}(\phi)\sqrt{|\phi|}\right)\text{sgn}(y^0)\frac{f'(x^0)}{f(x^0)}\,,
\end{align}
As for the original unicorn metrics \eqref{eq:unipos}, this geodesic spray is not quadratic in $y$, so our modified unicorn metrics \eqref{eq:uni_spacetime} are not Berwald. Before actually solving the field equations we will also confirm that the Landsberg tensor still vanishes, even after our modification, justifying the name \textit{unicorn}. To this end we employ the definition \eqref{eq:Landsb_tensor} of the Landsberg tensor and note that only the $G^i$ terms ($i=1,2,3$) can give a nontrivial contribution. For these terms we compute that
\begin{align}
    \bar\partial_\mu\bar\partial_\nu\bar\partial_\sigma G^i = \bar\partial_\mu\bar\partial_\nu\bar\partial_\sigma\sqrt{|\phi|}y^i + 3\delta^ i{}_{(\mu}\bar\partial_\nu\bar\partial_{\sigma)}\sqrt{|\phi|} .
\end{align}
To find the Landsberg tensor we need to contract this with $y_i = g_{i\mu}y^\mu = \tfrac{1}{2}\bar\partial_i F^2$, and it can be checked in a straightforward way that the latter can be written as some function times $\bar\partial_i\sqrt{|\phi|}$. It thus suffices to show that $\bar\partial_i\sqrt{|\phi|}\bar\partial_\mu\bar\partial_\nu\bar\partial_\sigma G^i=0$. Indeed, whenever $\mu\neq 0$, it follows from the homogeneity of $\sqrt{|\phi|}$ that the latter is equal to (for $\mu=0$ it vanishes immediately since $\phi$ does not depend on $y^0$)
\begin{align}
    &\bar\partial_i\sqrt{|\phi|}\left(\bar\partial_\mu\bar\partial_\nu\bar\partial_\sigma\sqrt{|\phi|}y^i + 3\delta^ i{}_{(\mu}\bar\partial_\nu\bar\partial_{\sigma)}\sqrt{|\phi|} \right) \\
    &= \sqrt{|\phi|}\bar\partial_\mu\bar\partial_\nu\bar\partial_\sigma\sqrt{|\phi|} + 3\bar\partial_{(\mu}\sqrt{|\phi|}\bar\partial_\nu\bar\partial_{\sigma)}\sqrt{|\phi|} \\
    &= \tfrac{1}{2}\bar\partial_\mu\bar\partial_\nu\bar\partial_\sigma \left(\sqrt{|\phi|}\right)^2 = \tfrac{1}{2}\bar\partial_\mu\bar\partial_\nu\bar\partial_\sigma |\phi| = 0,
\end{align}
which vanishes because $\phi$ is quadratic. Hence our modified unicorns are indeed non-Berwaldian Landsberg metrics, justifying their name.

\begin{prop}\label{prop:ricci_flatness}
    $F$ is Ricci-flat if and only if $f$ has the form $f(x^0) = c_1 \exp\left(c_2 x^0\right)$.
\end{prop}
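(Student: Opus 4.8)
The plan is to compute the Finsler Ricci scalar directly from the explicit spray and to exploit the fact that the entire $x$-dependence of $G^\mu$ enters through the single function $h(x^0) := f'(x^0)/f(x^0)$. Writing $N^\mu_\nu = \tfrac12\bar\partial_\nu G^\mu$ and using $\text{Ric} = y^\mu\left(\delta_\rho N^\rho_\mu - \delta_\mu N^\rho_\rho\right)$ together with $\delta_\mu = \partial_\mu - N^\sigma_\mu\bar\partial_\sigma$, everything is expressed through $\partial$- and $\bar\partial$-derivatives of $G^\mu$. The crucial structural observation is that, because $\phi_{ij}$ is constant, the spray factorizes as $G^\mu(x,y) = h(x^0)\,g^\mu(y)$ with $g^0 = (y^0)^2 - |\phi|$ and $g^i = 2\left(|y^0| + \text{sgn}(\phi)\sqrt{|\phi|}\right)\text{sgn}(y^0)\,y^i$, each homogeneous of degree two in $y$ and independent of $x$; hence $N^\mu_\nu = h(x^0)\,n^\mu_\nu(y)$ with $n^\mu_\nu = \tfrac12\bar\partial_\nu g^\mu$.

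Substituting this factorization into the formula for $\text{Ric}$, only $\partial_0$ acts nontrivially on the $x$-dependence (producing $h'$), while the connection-squared pieces produce $h^2$; no second $x$-derivatives ever appear, so one obtains the clean split
\[
\text{Ric} = A(y)\,h'(x^0) + B(y)\,h(x^0)^2,
\]
with $A = y^\mu n^0_\mu - y^0 n^\rho_\rho$ and $B = y^\mu\left(n^\sigma_\mu\bar\partial_\sigma n^\rho_\rho - n^\sigma_\rho\bar\partial_\sigma n^\rho_\mu\right)$, both functions of $y$ alone. On the smoothness domain ($y^0\neq0$, $\phi\neq0$) Ricci-flatness means $A(y)h' + B(y)h^2 = 0$ for all admissible $(x,y)$. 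Since the target condition $f = c_1\exp(c_2 x^0)$ is exactly $h'\equiv0$, the whole statement hinges on two facts: that $B$ vanishes identically, and that $A$ does not.

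The main obstacle, and the heart of the proof, is establishing $B\equiv0$. Here I would rely entirely on Euler's theorem applied to the homogeneous $g^\mu$ (degree two) and $\sqrt{|\phi|}$ (degree one). Using $y^\mu n^\sigma_\mu = g^\sigma$ and $y^\mu\bar\partial_\sigma n^\rho_\mu = n^\rho_\sigma$, the two terms collapse to $B = g^\sigma\bar\partial_\sigma n^\rho_\rho - n^\sigma_\rho n^\rho_\sigma$. Introducing the shorthand $w := y^0 + \text{sgn}(\phi)\,\text{sgn}(y^0)\sqrt{|\phi|}$, and noting that $\text{sgn}(\phi)$ and $\text{sgn}(y^0)$ are locally constant on the domain, a short Euler computation gives $\bar\partial_\rho g^\rho = 8w$, hence $n^\rho_\rho = 4w$ and $g^\sigma\bar\partial_\sigma n^\rho_\rho = 4w^2$; likewise $n^\sigma_\rho n^\rho_\sigma = \tfrac14\bar\partial_\rho g^\sigma\bar\partial_\sigma g^\rho$ evaluates to $4w^2$ once the cross-terms reassemble into the perfect square $w^2$. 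The two contributions cancel and $B\equiv0$. This cancellation is precisely the Landsberg/unicorn structure at work, and it is the step I expect to require the most care with the absolute values and signs.

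The same bookkeeping then yields $A = g^0 - y^0 n^\rho_\rho = -3(y^0)^2 - 4\,\text{sgn}(\phi)\text{sgn}(y^0)\sqrt{|\phi|}\,y^0 - |\phi|$, which is not the zero function on the domain (it tends to $-3(y^0)^2$ as $\phi\to0$). Therefore $\text{Ric} = A(y)\,h'(x^0)$ with $A\not\equiv0$, so $\text{Ric}$ vanishes for all admissible $y$ if and only if $h'(x^0)\equiv0$, i.e. $f'/f$ equals a constant $c_2$, which integrates to $f(x^0) = c_1\exp(c_2 x^0)$; the converse is immediate. Finally, since $\text{Ric}$ is homogeneous of degree two, the relation $R_{\mu\nu} = \tfrac12\bar\partial_\mu\bar\partial_\nu\text{Ric}$ gives $\text{Ric}=0 \Leftrightarrow R_{\mu\nu}=0$, so the two notions of Ricci-flatness coincide and the proof is complete.
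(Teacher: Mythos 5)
Your proposal is correct and takes essentially the same route as the paper's own proof: both exploit that the spray factorizes through $h = f'/f$, show that the connection-squared terms cancel (your $B\equiv 0$ is exactly the paper's identity $\bar\partial_\mu G^\rho\bar\partial_\rho G^\mu = 2G^\rho\bar\partial_\rho\bar\partial_\mu G^\mu = nP^2$), and reduce $\mathrm{Ric}$ to a nonvanishing $y$-dependent factor times $\partial_0^2\log|f| = h'$, which matches the paper's Eq.~\eqref{eq:Ricci_check_sgn_2}. The only cosmetic difference is the last step: the paper applies $\bar\partial_0^2$ to $\mathrm{Ric}$ to isolate $\partial_0^2\log|f|$, whereas you note directly that $A\not\equiv 0$ on the domain.
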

\begin{proof}
    By definition, and using homogeneity and the fact that $N^\mu_\nu = \tfrac{1}{2}\bar\partial_\nu G^\mu$, we have
    \begin{align}
    %
    \text{Ric} &= R^\mu{}_{\mu\nu}y^\nu = (\delta_\mu N^\mu_\nu-\delta_\nu N^\mu_\mu) y^\nu \\
    &=  y^\nu((\partial_\mu - N^\rho_\mu\bar\partial_\rho )N^\mu_\nu-(\partial_\nu -N^\rho_\nu\bar\partial_\rho)N^\mu_\mu)\\
    &=  y^\nu \partial_\mu N^\mu_\nu-  y^\nu\partial_\nu N^\mu_\mu  \\
    &\qquad  -  y^\nu N^\rho_\mu\bar\partial_\rho N^\mu_\nu  + y^\nu N^\rho_\nu\bar\partial_\rho N^\mu_\mu  \\
    &=  \tfrac{1}{2}\left(y^\nu\partial_\mu \bar\partial_\nu G^\mu-  y^\nu\partial_\nu \bar\partial_\mu G^\mu\right) \\
    &\qquad  -  \tfrac{1}{4}\left(y^\nu\bar\partial_\mu G^\rho\bar\partial_\rho\bar\partial_\nu G^\mu  - y^\nu \bar\partial_\nu G^\rho\bar\partial_\rho\bar\partial_\mu G^\mu\right)\\
    &=  \partial_\mu  G^\mu-  \tfrac{1}{2}y^\nu\partial_\nu \bar\partial_\mu G^\mu \\
    &\qquad  -  \tfrac{1}{4}\left(\bar\partial_\mu G^\rho\bar\partial_\rho G^\mu  - 2 G^\rho\bar\partial_\rho\bar\partial_\mu G^\mu\right).
    \end{align}
    Using the identities
    \begin{align}
        \begin{array}{ll}
            \bar\partial_0 P = 2 f'/f, &\quad \bar\partial_0 G^0 = 2y^0f'/f,   \\        
            \bar\partial_0^2  G^0 = 2 f'/f, &\quad \bar\partial_0\bar\partial_i G^0 = \bar\partial_0\bar\partial_i P = \bar\partial_0^2 P = 0,   \\
            y^i\bar\partial_i G^0= -2|\phi| f'/f, &\quad  y^i\bar\partial_i P= 2 \text{sgn}(\phi y^0)\sqrt{|\phi|} f'/f,
        \end{array}
    \end{align}
    one finds after some slightly tedious manipulations that the last two terms in the expression for the Ricci tensor can both be expressed as
    \begin{align}
        \bar\partial_\mu G^\rho\bar\partial_\rho G^\mu  = 2 G^\rho\bar\partial_\rho\bar\partial_\mu G^\mu = n P^2,
    \end{align}
    where $n = \dim M = 4$ in our case. Hence these terms cancel each other out precisely. Denoting $G^0 = \bar G^0 f'/f$ and $P = \bar P f'/f$, so that $\bar G^0$ and $\bar P$ do not depend on $x^\mu$, and using the fact that $\bar\partial_\mu G^\mu = n P$, one finds furthermore that 
    \begin{align}
    \partial_\mu G^\mu &= (\partial_0^2\log |f|)\bar G^0, \\
     y^\nu\partial_\nu \bar\partial_\mu G^\mu &=n y^0(\partial_0^2\log |f|)\bar P.
    \end{align}
    Consequently we have
    \begin{align}
        \text{Ric} &=  \partial_\mu  G^\mu-  \tfrac{1}{2}y^\nu\partial_\nu \bar\partial_\mu G^\mu \\
        &= +(\partial_0^2\log |f|)\\
        &\,\,\times \left((1-n)(y^0)^2 -n |y^0| \text{sgn}(\phi)\sqrt{|\phi|} - |\phi| \right), \label{eq:Ricci_check_sgn}
    \end{align}
    which in dimension $n=4$ reduces to
    \begin{align}
        \text{Ric} &= -(\partial_0^2\log |f|)\\
        &\,\,\times \left(3(y^0)^2 +4 |y^0| \text{sgn}(\phi)\sqrt{|\phi|} + |\phi| \right). \label{eq:Ricci_check_sgn_2}
    \end{align}
    If Ric $=0$ then we must in particular have 
    \begin{align}
        0=\bar\partial_0^2\text{Ric} = -6(\partial_0^2\log |f|).
    \end{align}
    It thus follows that Ric $=0$ if and only if $\partial_0^2\log |f| =0$, the general solution to which is given by the stated form of $f$.
\end{proof}

This shows that the following family of Finsler metrics are exact vacuum solutions to Pfeifer and Wohlfarth's field equations in Finsler gravity:
\begin{align}\label{eq:unicorn_solution}
     \boxed{F \!= \!c_1e^{c_2 x^0}\!\left( |y^0|\! +\! \text{sgn}(\phi)\sqrt{|\phi|} \right)\! \exp\!\left(\!\frac{|y^0|}{ |y^0| \!+ \!\text{sgn}(\phi)\sqrt{|\phi|} }\!\right)\!,}
\end{align}
where $\phi=\phi(\hat y) = \phi_{ij}y^i y^j$, with $\phi_{ij}$ being a three-dimensional nondegenerate, symmetric bilinear form with constant coefficients and signature $(+,+,-)$ or $(-,-,-)$.\\

In fact it turns out that \textit{any} solution of the type \eqref{eq:uni_spacetime} must have this form.
\begin{prop}\label{prop:solution}
    A Finsler metric of the form \eqref{eq:uni_spacetime} is a solution to the Finslerian field equations in vacuum, Eq. \eqref{eq:Pfeifer_Wohlfarth_eq_Landsberg}, if and only if it can be written  as \eqref{eq:unicorn_solution}.
\end{prop}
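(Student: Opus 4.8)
The plan is to prove the two implications separately, reusing the machinery already assembled. The reverse implication is essentially free: every metric of the form \eqref{eq:uni_spacetime} has been shown above to be Landsberg, hence in particular weakly Landsberg ($S_i=0$), and Prop.~\ref{prop:ricci_flatness} tells us that the choice $f(x^0)=c_1 e^{c_2 x^0}$ --- i.e. precisely the form \eqref{eq:unicorn_solution} --- renders $F$ Ricci-flat. The Proposition stating that any Ricci-flat weakly Landsberg space solves \eqref{eq:Pfeifer_Wohlfarth_eq} then immediately yields that such an $F$ satisfies the field equation, and in particular its weakly Landsberg reduction \eqref{eq:Pfeifer_Wohlfarth_eq_Landsberg}. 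So the entire content lies in the forward implication.

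For the forward implication I would exploit the product structure $F=f(x^0)\,\Phi(y)$, where $\Phi$ carries all the $y$-dependence of \eqref{eq:uni_spacetime}. Taking two $\bar\partial$-derivatives gives $g_{\mu\nu}=f(x^0)^2\,\hat g_{\mu\nu}(y)$, and hence $g^{\mu\nu}=f(x^0)^{-2}\,\hat g^{\mu\nu}(y)$, with $\hat g_{\mu\nu}$ a function of $y$ alone. Meanwhile, the computation behind Prop.~\ref{prop:ricci_flatness} already puts the Ricci scalar in factorized form,
\begin{align}
    \text{Ric}=-h(x^0)\,Q(y),\qquad h:=\partial_0^2\log|f|,
\end{align}
with $Q(y)=3(y^0)^2+4|y^0|\,\text{sgn}(\phi)\sqrt{|\phi|}+|\phi|$ read off from \eqref{eq:Ricci_check_sgn_2}. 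Since $h$ is $y$-independent, $R_{\mu\nu}=\tfrac12\bar\partial_\mu\bar\partial_\nu\text{Ric}=-\tfrac12 h\,\bar\partial_\mu\bar\partial_\nu Q$ is again $h$ times a function of $y$, and in $\tfrac{F^2}{3}g^{\mu\nu}R_{\mu\nu}$ the factor $f^2$ coming from $F^2=f^2\Phi^2$ cancels the $f^{-2}$ coming from $g^{\mu\nu}$. The entire left-hand side of \eqref{eq:Pfeifer_Wohlfarth_eq_Landsberg} thus collapses to a separated product
\begin{align}
    \text{Ric}-\frac{F^2}{3}g^{\mu\nu}R_{\mu\nu}=h(x^0)\,B(y),\qquad B:=-Q+\tfrac16\Phi^2\,\hat g^{\mu\nu}\bar\partial_\mu\bar\partial_\nu Q.
\end{align}

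The concluding step is then a separation-of-variables argument: because $B$ depends only on $y$ while $h$ depends only on $x^0$, the product $h\,B$ can vanish on the whole open conic domain only if $h\equiv 0$ or $B\equiv 0$; and $h=\partial_0^2\log|f|\equiv 0$ is exactly the condition $f(x^0)=c_1 e^{c_2 x^0}$ defining \eqref{eq:unicorn_solution}. It therefore remains to exclude the possibility $B\equiv 0$, which I would do by evaluating the field equation at a single conveniently chosen direction inside the timelike (or spacelike) cone and checking that $B\neq 0$ there. The hard part will be precisely this last verification: unlike $\text{Ric}$, the fundamental tensor $g_{\mu\nu}$ of the unicorn metric is genuinely unpleasant to invert because of the exponential factor in $\Phi$, so producing $\hat g^{\mu\nu}\bar\partial_\mu\bar\partial_\nu Q$ in closed form --- even at a single, cleverly aligned direction (e.g. with $\hat y$ along a $\phi$-eigenvector, where $g_{\mu\nu}$ simplifies) --- is the real computational obstacle. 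Once $B\not\equiv 0$ is secured, separation of variables forces $h\equiv 0$, and the proposition follows.
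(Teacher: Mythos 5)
Your skeleton is in fact the same as the paper's --- reverse implication via Prop.~\ref{prop:ricci_flatness} together with the fact that Ricci-flat weakly Landsberg spaces solve the field equation, and forward implication by writing the left-hand side of \eqref{eq:Pfeifer_Wohlfarth_eq_Landsberg} as a product $h(x^0)B(y)$ with $h=\partial_0^2\log|f|$, followed by separation of variables; your algebra giving $B=-Q+\tfrac16\Phi^2\hat g^{\mu\nu}\bar\partial_\mu\bar\partial_\nu Q$ is also correct. But the proof is not complete, because the one claim that carries all the content, $B\not\equiv 0$, is exactly the claim you defer. This is not a formality that could be waved through: \eqref{eq:Pfeifer_Wohlfarth_eq_Landsberg} is a trace-corrected equation, and trace corrections can render an equation vacuous (recall that the Einstein tensor vanishes identically in two dimensions, so the trace-corrected vacuum equation there is empty while $R_{\mu\nu}=0$ is not; and the paper stresses that in the Finsler setting the completed equation is genuinely weaker than $\mathrm{Ric}=0$). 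If $B$ happened to vanish identically for this family, every $f$ would solve the field equation and the proposition would be false. As it stands, your proposal is a correct reduction of the problem, not a proof of it.

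Closing this gap is precisely what the paper's proof consists of: working in coordinates where $\phi$ is diagonal with entries $\pm1$, it computes $g_{\mu\nu}$, its inverse and $R_{\mu\nu}$ explicitly and finds that the left-hand side of \eqref{eq:Pfeifer_Wohlfarth_eq_Landsberg} equals
\begin{align}
\frac{-\partial_0^2\log |f|}{3\sqrt{|\phi|}}\left(-4\,\mathrm{sgn}(\phi)|y^0|^3 -5y^0 \sqrt{|\phi|} + |\phi|^{3/2} \right),
\end{align}
i.e.\ it exhibits $B$ in closed form. At fixed non-null $\hat y$ the bracketed factor is a cubic polynomial in $y^0$ on each of the regions $y^0>0$ and $y^0<0$, so it has finitely many zeros and cannot vanish on any open set; for instance it tends to $|\phi|^{3/2}\neq 0$ as $y^0\to 0$. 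Separation of variables then forces $\partial_0^2\log|f|=0$, and the last sentence of the proof of Prop.~\ref{prop:ricci_flatness} finishes the argument. Note also that the computational obstacle you anticipate --- inverting $g_{\mu\nu}$ through the exponential factor --- is handled by the same device as in Appendix~\ref{app:prf4}: using the symmetry of $\phi$ one may rotate coordinates so that the chosen direction satisfies $y^2=y^3=0$, where $g_{\mu\nu}$ is block diagonal and easily inverted. So your ``evaluate at one cleverly aligned direction'' strategy is entirely viable, but it must actually be carried out; it is the substance of the proposition, not an afterthought.
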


Before we give a sketch of the proof, it is important to point out the exact meaning of the word `locally' in the proposition, as it will have essential consequences for the physical viability of such solutions. Of course, the word applies first and foremost to the $x$-coordinates in the usual sense, but it also applies to the tangent space coordinates $y^i$: there is \textit{a priori} no reason why one could not pick, at some point $x^\mu$, say, a certain $\phi_{ij}$ in a certain subset of $T_xM$ and a different $\phi_{ij}$ in a different subset of $T_xM$. Of course this could in general result in not having smoothness across the interface of the two regions, but this does not necessarily have to be a problem, unless one sets very strict smoothness requirements.\\
%
%
For consistency the different parts of $TM$ must satisfy some conditions. It is most natural to require that any such part be a conic subbundle, i.e. an open conic subset with nonempty fiber at each point in $M$.

\begin{proof}
    In four spacetime dimensions the proof is straightforward and most easily performed in convenient coordinates in which $\phi$ is diagonal with all nonvanishing entries equal to $+1$ or $-1$. From \eqref{eq:uni_spacetime} one can directly compute $g_{\mu\nu}$ and then its inverse $g^{\mu\nu}$. From \eqref{eq:Ricci_check_sgn_2} together with \eqref{eq:def_Ricci} one can immediately compute $R_{\mu\nu}$. We omit the intermediate expressions because they are somewhat lengthy, but plugging all of this into the field equation \eqref{eq:Pfeifer_Wohlfarth_eq_Landsberg} leads to
    \begin{align}
        \frac{-\partial_0^2\log |f|}{3\sqrt{|\phi|}}\left(-4\,\text{sgn}(\phi)|y^0|^3 -5y^0 \sqrt{|\phi|} + |\phi|^{3/2} \right) = 0.
    \end{align}
    Clearly this equation can only be satisfied for all $y^\mu$ in an open set where the fundamental tensor has Lorentzian signature if $\partial_0^2\log |f|=0$, in which case (the last sentence in the proof of) Prop. \ref{prop:ricci_flatness} shows that $F$ is in fact Ricci-flat and therefore must have the form \eqref{eq:unicorn_solution}.
    
\end{proof}

With this we found an exact solution to the Finsler gravity equations \eqref{eq:unicorn_solution}, starting from a generalized version of Elgendi's unicorns \eqref{eq:uni_spacetime}.

\subsection{Physical interpretation: A linearly expanding universe}
Having analyzed the mathematical properties of the unicorn Finsler spacetimes \eqref{eq:uni_spacetime}, and found the exact unicorn vacuum solution \eqref{eq:unicorn_solution} of the Finsler gravity vacuum equations \eqref{eq:Pfeifer_Wohlfarth_eq_Landsberg}, we now turn to the physical interpretation of this solution. We find that these Finsler gravity vacuum solutions yield a vacuum cosmology with linear time dependence of the scale factor.\\

To arrive at this conclusion we highlight the following properties of the the unicorn Finsler spacetimes \eqref{eq:uni_spacetime}:
\begin{itemize}
    \item Conformal flatness, with a conformal factor that is only spacetime dependent 
    \begin{align}
        F(x,y) = f(x) F_0(y)\,.
    \end{align}
    \item Cosmological symmetry, for the case when $\phi_{ij}$ has signature $(-,-,-)$, since then, by introducing spatial spherical coordinates $(r,\theta,\phi)$, we can write
    \begin{align}
        F(x,y) &= F(x^0,y^0,w), \\
        w^2 &= (y^1)^2+ (y^2)^2+(y^3)^2 \\
        &= (y^r)^2 + r^2 \left( (y^\theta)^2 + \sin^2\theta (y^\phi)^2\right)\,,
    \end{align}
    which is precisely of the form of a spatial flat homogeneous and isotropic Finsler geometry \cite{Hohmann:2020mgs}. This construction does not work for $\phi_{ij}$ with signature $(+,+,-)$, since then $y^3$, and not $y^1$ would be the timelike direction.
\end{itemize}
Combining these observations, we find that the unicorn Finsler spacetimes \eqref{eq:uni_spacetime} are of the form
\begin{align}\label{eq:conformalcomso}
    F(x^0, y^0,w) = f(x^0) F_0(y^0,w)\,.
\end{align}
This form reminds one immediately of classical flat FLRW spacetimes in conformal time $\eta$, which we identify here with the $x^0$ coordinate. When written in the language of Finsler geometry they are of the form \eqref{eq:conformalcomso} with
\begin{align}
    F_{0FLRW} =\sqrt{ (- (y^0)^2 + (y^1)^2 (y^2)^2 + (y^3)^2)}\,.
\end{align}
A redefinition of the time coordinate via $\frac{\partial \tilde x^0}{\partial x^0} = f(x^0)$, which implies that $\tilde y^0 = f(x^0) y^0$, then leads to the standard form of flat FLRW geometry
\begin{align}
    F = \sqrt{ (- (\tilde y^0)^2 +f(\tilde x^0)^2 \left( (y^1)^2 + (y^2)^2 + (y^3)^2)\right)}\,,
\end{align}
where the conformal factor is nothing but the usual cosmological scale factor and $\tilde x^0$ is the usual cosmological time $t$.

For the Finsler metric \eqref{eq:uni_spacetime} we employ the coordinate change $\frac{\partial \tilde x^0}{\partial x^0} = f(x^0)$, implying that $\tilde y^0 = f(x^0) y^0$, so that 
\begin{align}
     F  
    =  \left( |\tilde y^0| + \text{sgn}(\phi) f(\tilde x^0) \sqrt{|\phi|} \right) e^{\frac{|\tilde y^0|}{ |\tilde y^0|  + \text{sgn}(\phi) f(\tilde x^0) \sqrt{|\phi|} }}\,.
\end{align}
Hence, as in the classical FLRW geometry case, the conformal factor can be interpreted as scale factor of the spatial universe, $x^0$ as conformal time $\eta$ and $\tilde x^0$ as cosmological coordinate time $t$. For now we will adopt this classical cosmology notation.

To be a solution of the Finsler gravity equations we found that $f(\eta) = c_1 e^{\eta c_2}$, which implies from the coordinate change between $\eta$ and $t$ that
\begin{align}
    dt = c_1 e^{\eta c_2} d\eta\ \Leftrightarrow\ \eta(t) = \frac{1}{c_2} \ln\left(\frac{c_2}{c_1}(t-c_3)\right)\,,
\end{align}
where $c_3$ is a constant of integration. Thus, in cosmological time, the scale factor of the vacuum Finsler cosmology we find is
\begin{align}
    f(t) = c_2 (t - c_3)\,.
\end{align}

Interestingly, it turns out that these solutions are not only Ricci-flat and conformally flat (by their explicit form), but flat, in the sence that all components of the nonlinear curvature tensor $R^a{}_{bc} = \delta_b N^a_c - \delta_c N^a_a$ vanish. Nevertheless the spacetime has nontrivial geometric features.

\section{Discussion}
The solutions that we have presented above are, to the best of our knowledge, the first non-Berwaldian exact solutions to Pfeifer and Wohlfarth's field equation. Known exact solutions are scarce since in particular the Landsberg tensor terms in the field equations are difficult to understand. Employing a unicorn ansatz, i.e. non-Berwaldian Landsberg spaces, makes our solutions particularly special. We have shown that there is a subclass of our solutions for which the light cone structure is physically viable. In fact, it is equivalent to the light cone of the flat special relativistic Minkowski metric. Moreover, we have shown that one of the solutions has cosmological symmetry, i.e. it is spatially homogeneous and isotropic. Additionally, it is conformally flat, with the conformal factor depending only on the timelike coordinate, and we have shown that this conformal factor can be interpreted as the scale factor, which then turns out to be a linear function of cosmological time, leading to the natural interpretation of a linearly expanding (or contracting) Finslerian universe.
\\

As an additional curiosity that we have found that the requirement of a physically light cone structure does not strictly speaking necessitate Lorentzian signature, as is widely assumed. This is illustrated by one of our solutions, which has positive definite signature, and yet has a light cone that is equivalent to the light cone of flat Minkowski space. It is interesting and suprising that such things are apparently possible in Finsler geometry, and this paper shows the first explicit example of a (positive definite) Finsler metric with this property, which seems to be closely related to lack of smoothness of the Finsler metric in certain nontrivial subsets of $TM$. 

On the other hand, when the interior of the light cone is interpreted as the set of timelike directions, the future and past pointing unit-normalized timelike directions each form a deformed sphere with one point removed (the zero vector) rather than a deformed hyperboloid. We expect that this will affect relativistic time dilation between different observers in an interesting way and this should be further investigated in the future.\\ 

The results obtained in this paper motivate us to begin a systematic search for cosmological Landsberg spacetimes that solve the field equations, using recent results characterizing cosmological symmetry in Finsler spacetimes \cite{Hohmann:2020mgs} and Elgendi's machinery for constructing unicorns using conformal transformations \cite{elgendi_2020, Elgendi2021a}. Since (properly Finslerian) cosmological solutions of Berwald type are necessarily static \cite{Hohmann:2020mgs} any interesting such Landsberg spacetime must necessarily be a unicorn.

The exciting next step in the study of Finsler gravity, is to study unicorn solutions of the field equation sourced by the 1-particle distribution function of a kinetic gas, in homogeneous and isotropic symmetry. This scenario describes a realistic universe, filled with a kinetic gas with a nontrivial velocity distribution. As we already obtained a nontrivial solution for vacuum Finsler unicorn cosmology, more realistic, matter sourced solutions will help us to further investigate the conjecture that an accelerated expansion of the Universe is caused by the contribution of the velocity distribution of the cosmological gas, which sources a Finslerian spacetime geometry.

\begin{acknowledgments}
C.P. was funded by the cluster of excellence Quantum Frontiers funded by the Deutsche Forschungsgemeinschaft (DFG, German Research Foundation) under Germany's Excellence Strategy - EXC-2123 QuantumFrontiers - 390837967. The authors would like to acknowledge networking support by the COST Action CA18108.
\end{acknowledgments}

\appendix
\section{Proof of Prop. \ref{prop:signature}}\label{app:prf4}

\begin{proof}
We may choose coordinates such that $\phi = \phi(\hat y) = \varepsilon_1(y^1)^2+\varepsilon_2(y^2)^2+\varepsilon_3(y^3)^2$. Since the spacetime dimension is fixed to 4, wherever $F$ is sufficiently differentiable the calculation of the determinant of the fundamental tensor is in principle a straightforward exercise. It is given by
\begin{align}
    \det g 
    &=  \textrm{sgn}(\phi(\hat y))\varepsilon_1\varepsilon_2\varepsilon_3 f(x^0)^8\exp\left(\tfrac{8|y^0|}{ |y^0| + \text{sgn}(\phi)\sqrt{|\phi|} }\right)\,.
\end{align}
The determinant already gives us a pretty good idea of what the possible signature of $g_{\mu\nu}$ can be. In particular, since $g_{\mu\nu}$ is a four-dimensional matrix, it has Lorentzian signature, either of type $(+,-,-,-)$ or $(-,+,+,+)$, if and only if its determinant is negative. 
\begin{enumerate}[a)]
    \item Suppose that $\phi_{ij}$ is positive definite; then all $\epsilon_i$ and $\textrm{sgn}(\phi(\hat y))$ are positive, and hence $\det g$ is positive.
    \item Suppose that  $\phi_{ij}$ is negative definite; then all $\epsilon_i$ and $\textrm{sgn}(\phi(\hat y))$ are negative, and hence $\det g$ is positive. 
    \item Suppose that  $\phi_{ij}$ is of Lorentzian signature $(+,+,-)$; then $\det g$ is negative whenever $\textrm{sgn}(\phi(\hat y))>0$, i.e. on $\mathcal{S(\phi)}$, and $\det g$ is positive whenever $\textrm{sgn}(\phi(\hat y))<0$, i.e. on $\mathcal{T(\phi)}$.
    \item Suppose that $\phi_{ij}$ is of Lorentzian signature $(-,-,+)$; then $\det g$ is negative whenever $\textrm{sgn}(\phi(\hat y))<0$, i.e. on $\mathcal{S(\phi)}$, and $\det g$ is positive whenever $\textrm{sgn}(\phi(\hat y))>0$, i.e. on $\mathcal{T(\phi)}$.
\end{enumerate}

This already shows that $g_{\mu\nu}$ is Lorentzian if and only if $\phi$ is Lorentzian and $y\in \R\times \mathcal S(\phi)$. But the sign of the determinant does not suffice to determine whether this signature is mostly plus or mostly minus. Similarly it does not tell us much about the signature of  $g_{\mu\nu}$ when $\phi$ is positive or negative definite. In order to found, we will distinguish the following cases.\\

\subsection*{Case 1: $\phi$ Lorentzian and $y\in \R\times\mathcal S(\phi)$}
We first consider the case that $\phi$ is Lorentzian. Without loss of generality (WLOG) we set $\phi(\hat y) = \epsilon (y^1)^2+ \epsilon (y^2)^2 - \epsilon (y^3)^2$, where $\epsilon=\pm 1$. The choice of the sign $\epsilon$ selects if we are in case c) or d) from above.\\

Now note that given a vector $y\in T_xM$ that is $\phi$-spacelike, it follows from the symmetries of the Finsler metric and in particular from the $3$-dimensional Lorentz symmetry of $\phi$ that for we may always change coordinates, without changing the form of $\phi$ (and $F$), such that $y^2=y^3=0$.\\

For any choice of epsilon, by direct calculation we find, using that $\epsilon^2=1$ and $|\epsilon|=1$, that $g_{\mu\nu}$ is of the form
\begin{align}
g_{\mu\nu} = e^{\frac{2 |y^0|}{|y0|+\varepsilon|y1|}} f(x^0)^2
\begin{pmatrix}
M & 0 & 0\\
0 & 1 & 0 \\
0 & 0 & -1
\end{pmatrix}\,,
\end{align}
where $M$ is an ($\varepsilon$-dependent) positive definite $2\times 2$ matrix\footnote{$M$ must be either positive definite or negative definite, since we have already shown that $g_{ij}$ has Lorentzian signature. The fact that $M_{ab}v^av^b = ((y^0)^2+(y^1)^2)/(|y^0| + \varepsilon|y^1|)^2>0$ for $v^a = (0,1)$ thus shows that $M$ is positive definite.}. Hence we conclude that $g_{\mu\nu}$ is of the mostly plus type $(+,+,+,-)$.\\

\subsection*{Case 2: $\phi$ Lorentzian and $y\in \R\times\mathcal T(\phi)$}

In this case we may WLOG choose coordinates such that $\phi(\hat y) = -\epsilon (y^1)^2+ \epsilon (y^2)^2 + \epsilon (y^3)^2$, where $\epsilon=\pm 1$, and 
such that $y^2=y^3=0$. Again by direct calculation we find that $g_{\mu\nu}$ is of the form

\begin{align}
g_{\mu\nu} = e^{\frac{2 |y^0|}{|y0|+\varepsilon|y1|}} f(x^0)^2
\begin{pmatrix}
M & 0 & 0\\
0 & -1 & 0 \\
0 & 0 & -1
\end{pmatrix}\,,
\end{align}
where $M$ is a positive definite $2\times 2$ matrix. Hence we conclude that $g_{\mu\nu}$ is in this case of signature $(+,-,-,+)$.\\

\subsection*{Case 3: $\phi$ positive or negative definite}
In this case we may WLOG choose coordinates such that $\phi(\hat y) = \epsilon (y^1)^2+ \epsilon (y^2)^2 + \epsilon (y^3)^2$, where $\epsilon=\pm 1$, and 
such that, for any given $y\in T_xM$, we have $y^2=y^3=0$. Again by direct calculation we find that $g_{\mu\nu}$ is of the form

\begin{align}
g_{\mu\nu} = e^{\frac{2 |y^0|}{|y0|+\varepsilon|y1|}} f(x^0)^2
\begin{pmatrix}
M & 0 & 0\\
0 & 1 & 0 \\
0 & 0 & 1
\end{pmatrix}\,,
\end{align}
where $M$ is a positive definite $2\times 2$ matrix. Hence we conclude that $g_{\mu\nu}$ is positive definite.
\end{proof}


\bibliographystyle{ieeetr}
\bibliography{GeneralBib}

\begin{thebibliography}{10}

\bibitem{Ehlers2012}
J.~Ehlers, F.~A.~E. Pirani, and A.~Schild, ``Republication of: The geometry of
  free fall and light propagation,'' {\em General Relativity and Gravitation},
  vol.~44, pp.~1587--1609, Jun 2012.

\bibitem{Bernal_2020}
A.~N. Bernal, M.~A. Javaloyes, and M.~S{\'{a}}nchez, ``Foundations of finsler
  spacetimes from the observers' viewpoint,'' {\em Universe}, vol.~6, p.~55,
  apr 2020.

\bibitem{Lammerzahl:2018lhw}
C.~Lammerzahl and V.~Perlick, ``{Finsler geometry as a model for relativistic
  gravity},'' 2018.

\bibitem{TAVAKOL198523}
R.~Tavakol and N.~{Van Den Bergh}, ``Finsler spaces and the underlying geometry
  of space-time,'' {\em Physics Letters A}, vol.~112, no.~1, pp.~23--25, 1985.

\bibitem{Addazi_2022}
A.~Addazi {\em et~al.}, ``Quantum gravity phenomenology at the dawn of the
  multi-messenger era{\textemdash}a review,'' {\em Progress in Particle and
  Nuclear Physics}, vol.~125, p.~103948, jul 2022.

\bibitem{Girelli:2006fw}
F.~Girelli, S.~Liberati, and L.~Sindoni, ``{Planck-scale modified dispersion
  relations and Finsler geometry},'' {\em Phys. Rev.}, vol.~D75, p.~064015,
  2007.

\bibitem{Raetzel:2010je}
D.~Raetzel, S.~Rivera, and F.~P. Schuller, ``{Geometry of physical dispersion
  relations},'' {\em Phys. Rev.}, vol.~D83, p.~044047, 2011.

\bibitem{Rodrigues:2022mfj}
E.~Rodrigues and I.~P. Lobo, ``Revisiting {L}egendre transformations in
  {F}insler geometry,'' 2022.

\bibitem{Lobo:2020qoa}
I.~P. Lobo and C.~Pfeifer, ``{Reaching the Planck scale with muon lifetime
  measurements},'' {\em Phys. Rev. D}, vol.~103, no.~10, p.~106025, 2021.

\bibitem{Hohmann:2019sni}
M.~Hohmann, C.~Pfeifer, and N.~Voicu, ``{Relativistic kinetic gases as direct
  sources of gravity},'' {\em Phys. Rev. D}, vol.~101, no.~2, p.~024062, 2020.

\bibitem{Hohmann:2020yia}
M.~Hohmann, C.~Pfeifer, and N.~Voicu, ``{The kinetic gas universe},'' {\em Eur.
  Phys. J. C}, vol.~80, no.~9, p.~809, 2020.

\bibitem{CANTATA:2021ktz}
E.~N. Saridakis {\em et~al.}, ``{Modified Gravity and Cosmology: An Update by
  the CANTATA Network},'' 5 2021.

\bibitem{Pfeifer_2019}
C.~Pfeifer, ``{Finsler spacetime geometry in Physics},'' {\em Int. J. Geom.
  Meth. Mod. Phys.}, vol.~16, no.~supp02, p.~1941004, 2019.

\bibitem{Bao_unicorns}
D.~Bao, ``On two curvature-driven problems in {R}iemann-{F}insler geometry,''
  in {\em Finsler Geometry, Sapporo 2005: In Memory of Makoto Matsumoto},
  vol.~48 of {\em Adv. Stud. Pure Math.}, pp.~19--71, Mathematical Society of
  Japan, 2007.

\bibitem{asanov_unicorns}
G.~Asanov, ``Finsleroid-finsler spaces of positive-definite and relativistic
  types,'' {\em Reports on Mathematical Physics}, vol.~58, no.~2, pp.~275--300,
  2006.

\bibitem{shen_unicorns}
Z.~Shen, ``On a class of {L}andsberg metrics in {F}insler geometry,'' {\em
  Canadian Journal of Mathematics}, vol.~61, no.~6, p.~1357–1374, 2009.

\bibitem{Elgendi2021a}
S.~Elgendi, ``Solutions for the {L}andsberg unicorn problem in {F}insler
  geometry,'' {\em Journal of Geometry and Physics}, vol.~159, p.~103918, 2021.

\bibitem{Pfeifer:2011xi}
C.~Pfeifer and M.~N.~R. Wohlfarth, ``{Finsler geometric extension of Einstein
  gravity},'' {\em Phys.Rev.}, vol.~D85, p.~064009, 2012.

\bibitem{Hohmann_2019}
M.~Hohmann, C.~Pfeifer, and N.~Voicu, ``{Finsler gravity action from
  variational completion},'' {\em Phys. Rev. D}, vol.~100, no.~6, p.~064035,
  2019.

\bibitem{Fuster:2015tua}
A.~Fuster and C.~Pabst, ``Finsler $pp$-waves,'' {\em Phys. Rev.}, vol.~D94,
  no.~10, p.~104072, 2016.

\bibitem{Fuster:2018djw}
A.~Fuster, C.~Pabst, and C.~Pfeifer, ``{Berwald spacetimes and very special
  relativity},'' {\em Phys. Rev.}, vol.~D98, no.~8, p.~084062, 2018.

\bibitem{Heefer_2021}
S.~Heefer, C.~Pfeifer, and A.~Fuster, ``Randers pp-waves,'' {\em Physical
  Review D}, vol.~104, jul 2021.

\bibitem{Heefer2023b}
S.~Heefer and A.~Fuster, ``Finsler gravitational waves of $(\alpha,\beta)$-type
  and their observational signature,'' 2023.
\newblock preprint: arxiv:2302.08334.

\bibitem{Beem}
J.~K. Beem, ``Indefinite {F}insler spaces and timelike spaces,'' {\em Can. J.
  Math.}, vol.~22, p.~1035, 1970.

\bibitem{Javaloyes2018}
M.~Javaloyes and M.~S{\'a}nchez, ``On the definition and examples of cones and
  finsler spacetimes,'' 05 2018.

\bibitem{Caponio_2020}
E.~Caponio and A.~Masiello, ``{On the analyticity of static solutions of a
  field equation in Finsler gravity},'' {\em Universe}, vol.~6, no.~4, p.~59,
  2020.

\bibitem{Hohmann:2021zbt}
M.~Hohmann, C.~Pfeifer, and N.~Voicu, ``{Mathematical foundations for field
  theories on Finsler spacetimes},'' {\em J. Math. Phys.}, vol.~63, no.~3,
  p.~032503, 2022.

\bibitem{Szilasi}
J.~Szilasi, R.~L. Lovas, and D.~C. Kertesz, {\em Connections, Sprays and
  Finsler Structures}.
\newblock World Scientific, 2014.

\bibitem{Bao}
D.~Bao, S.-S. Chern, and Z.~Shen, {\em An introduction to Finsler-Riemann
  geometry}.
\newblock Springer, New York, 2000.

\bibitem{UnicornSurvey}
A.~Tayebi, ``A survey on unicorns in {F}insler geometry,'' {\em AUT Journal of
  Mathematics and Computing}, vol.~2, no.~2, pp.~239--250, 2021.

\bibitem{Rutz}
S.~Rutz, ``A {F}insler generalisation of {E}instein's vacuum field equations,''
  {\em General Relativity and Gravitation}, vol.~25, p.~1139, 1993.

\bibitem{Horvath1950}
J.~I. Horv\'ath, ``A geometrical model for the unified theory of physical
  fields,'' {\em Phys. Rev.}, vol.~80, pp.~901--901, Dec 1950.

\bibitem{Horvath1952}
J.~I. Horv\'ath and A.~Mo\'or, ``Entwicklung einer einheitlichen {F}eldtheorie
  begr\"undet auf die {F}inslersche {G}eometrie,'' {\em Zeitschrift f\"ur
  Physik}, vol.~131, pp.~544--570, 1952.

\bibitem{Ikeda1981}
S.~Ikeda, ``On the theory of fields in {F}insler spaces,'' {\em Journal of
  Mathematical Physics}, vol.~22, no.~6, pp.~1215--1218, 1981.

\bibitem{Asanov1983}
G.~S. Asanov, ``Gravitational field equations based on {F}insler geometry,''
  {\em Foundations of Physics}, vol.~13, pp.~501--527, 5 1983.

\bibitem{Chang:2009pa}
Z.~Chang and X.~Li, ``{Modified Friedmann model in Randers-Finsler space of
  approximate Berwald type as a possible alternative to dark energy
  hypothesis},'' {\em Phys. Lett.}, vol.~B676, pp.~173--176, 2009.

\bibitem{Kouretsis:2008ha}
A.~P. Kouretsis, M.~Stathakopoulos, and P.~C. Stavrinos, ``{The general very
  special relativity in Finsler cosmology},'' {\em Phys. Rev.}, vol.~D79,
  p.~104011, 2009.

\bibitem{Stavrinos2014}
P.~Stavrinos, O.~Vacaru, and S.~I. Vacaru, ``Modified {E}instein and {F}insler
  like theories on tangent {L}orentz bundles,'' {\em International Journal of
  Modern Physics D}, vol.~23, no.~11, p.~1450094, 2014.

\bibitem{Voicu:2009wi}
N.~Voicu, ``{New considerations on Hilbert action and Einstein equations in
  anisotropic spaces},'' {\em AIP Conf. Proc.}, vol.~1283, pp.~249--257, 2010.

\bibitem{Minguzzi:2014fxa}
E.~Minguzzi, ``{The connections of pseudo-{F}insler spaces},'' {\em Int. J.
  Geom. Meth. Mod. Phys.}, vol.~11, no.~07, p.~1460025, 2014.
\newblock [Erratum: Int. J. Geom. Meth. Mod. Phys.12,no.7,1592001(2015)].

\bibitem{Voicu_2015}
N.~Voicu and D.~Krupka, ``Canonical variational completion of differential
  equations,'' {\em Journal of Mathematical Physics}, vol.~56, p.~043507, apr
  2015.

\bibitem{Perlick2008}
V.~Perlick, {\em On the Radar Method in General-Relativistic Spacetimes},
  pp.~131--152.
\newblock Berlin, Heidelberg: Springer Berlin Heidelberg, 2008.

\bibitem{Pfeifer:2014yua}
C.~Pfeifer, ``{Radar orthogonality and radar length in Finsler and metric
  spacetime geometry},'' {\em Phys. Rev. D}, vol.~90, no.~6, p.~064052, 2014.

\bibitem{Gurlebeck:2018nme}
N.~G{\"u}rlebeck and C.~Pfeifer, ``{Observers' measurements in premetric
  electrodynamics I: Time and radar length},'' {\em Accepted for publication in
  PRD}, 2018.

\bibitem{Hohmann:2020mgs}
M.~Hohmann, C.~Pfeifer, and N.~Voicu, ``{Cosmological Finsler Spacetimes},''
  {\em Universe}, vol.~6, no.~5, p.~65, 2020.

\bibitem{elgendi_2020}
S.~Elgendi, ``On the problem of non-{B}erwaldian {L}andsberg spaces,'' {\em
  Bulletin of the Australian Mathematical Society}, vol.~102, no.~2,
  pp.~331--341, 2020.

\end{thebibliography}

\end{document}